\newtheorem{theorem}{Theorem}[section]
\newtheorem{lemma}[theorem]{Lemma}
\newtheorem{proposition}[theorem]{Proposition}
\newtheorem{corollary}[theorem]{Corollary}
\theoremstyle{definition}
\newtheorem{example}{Example}
\newtheorem*{remark}{Remark}
\begin{document}

\setcounter{page}{1}

\title{Iterates of quantum operations}

\author{J\'ozsef Zsolt Bern\'ad}
\affiliation{Peter Gr\"unberg Institute (PGI-8), Forschungszentrum J\"ulich, D-52425 J\"ulich, Germany}
\email{j.bernad@fz-juelich.de}
\affiliation{
Institut f\"{u}r Angewandte Physik, Technische Universit\"{a}t Darmstadt, D-64289 Darmstadt, Germany
}




\begin{abstract}
Iterates of quantum operations and their convergence are investigated in the context of mean ergodic theory. We discuss in detail the convergence of 
the iterates and show that the uniform ergodic theorem plays an essential role. Our results will follow from some general 
theorems concerning completely positive maps, mean ergodic operators, and operator algebras on Hilbert spaces. A few examples of both finite and infinite dimensional Hilbert spaces are presented as well.
\end{abstract} \maketitle

\section{Introduction and preliminaries}
\label{I}

Quantum system are never perfectly closed and therefore interactions always take place with certain parts of their environment. This is the concept of open quantum systems, where
we wish to neglect up to some extent the dynamics of the environment but follow the state changes of the central system. These processes of state changes are called quantum operations. Quantum operations 
in the Hilbert space formulation of 
quantum mechanics are obtained in the following way \cite{Hellwig}: we start with an uncorrelated joint state of the central system and the environment; it is followed by a joint unitary evolution; and then
an observer measures a property of the environment, described by projective operations acting only on the environment. It has been shown by Kraus \cite{Kraus} that quantum operations are 
completely positive. They play an important role in theory and applications of uniformly continuous completely positive dynamical semigroups \cite{GKS, Lindblad, Spohn}. Physical applications are present in
various subfields of quantum information processing, like quantum computing \cite{Nielsen} or quantum control theory \cite{Wiseman}, but also in problems related to the foundations of quantum mechanics \cite{Joos}.

Throughout this manuscript, we consider a separable Hilbert space $\mathcal{H}$ \cite{comment} with inner product $ \langle.\,,.\rangle$ which
is conjugate linear in the first and linear in the second variable. The norm $\langle x,x \rangle^{1/2}$ of any element $x$ in $\mathcal{H}$ will be denoted
by $\|x\|$. The adjoint of a bounded linear operator $A$ on $\mathcal{H}$ is the unique
operator $A^\dagger$ satisfying $\langle A^\dagger x, y \rangle=\langle x, Ay \rangle$ for all $x,y$ in $\mathcal{H}$. The set of all bounded linear operators on $\mathcal{H}$ is denoted by 
$\mathcal{B}(\mathcal{H})$, which is a Banach space with respect to the operator norm
\begin{equation}
 \|A\|=\sup\{\|Ax\|: x \in \mathcal{H}, \|x\| \leqslant 1\}. \nonumber
\end{equation}
The Banach space of trace class operators is defined as
\begin{equation}
   \mathcal{B}_1(\mathcal{H})=\left \{ X \in \mathcal{B}(\mathcal{H}): \|X\|_1=\mathrm{Tr}\left\{\sqrt{ X^\dagger X}\right\} <\infty \right \}, \nonumber
\end{equation}
in which the set of quantum states or density operators is given by
\begin{equation}
 \mathcal{D}(\mathcal{H})=\left \{\rho \in \mathcal{B}_1(\mathcal{H}): \rho \geqslant 0, \quad \mathrm{Tr} \left\{\rho \right\}=1 \right \}. \nonumber
\end{equation}
Let us consider the $n$-dimensional Hilbert space $\mathbb{C}^n$ and a linear map $\phi: \mathcal{B}(\mathcal{H}) \rightarrow \mathcal{B}(\mathcal{H})$. Operators in $\mathcal{B}(\mathcal{H} \otimes \mathbb{C}^n)$
are $n \times n$ matrices with entries $A_{ij}$ acting on the Hilbert space $\mathcal{H}$ for each $1\leqslant i,j\leqslant n$. We define the linear map
$\phi^{(n)}:  \mathcal{B}(\mathcal{H} \otimes \mathbb{C}^n) \rightarrow \mathcal{B}(\mathcal{H} \otimes \mathbb{C}^n)$ by
\begin{equation}
 \phi^{(n)}\begin{pmatrix}
A_{11} & \cdots & A_{1n} \\
\vdots & \ddots & \vdots \\
A_{n1} & \cdots & A_{nn}
\end{pmatrix} \mapsto \begin{pmatrix}
\phi(A_{11}) & \cdots & \phi(A_{1n}) \\
\vdots & \ddots & \vdots \\
\phi(A_{n1}) & \cdots & \phi(A_{nn})
\end{pmatrix}. \nonumber
\end{equation}
If $\phi^{(n)}$ is positive for all $n\geqslant 1$ then $\phi$ is a completely positive map. A linear map $\phi: \mathcal{D}(\mathcal{H}) \rightarrow \mathcal{D}(\mathcal{H})$ of the form
\begin{equation}
 \phi(\rho)=\sum_i V_i \rho V^\dagger_i, \label{phi} 
\end{equation}
with a family of operators $V_i \in \mathcal{B}(\mathcal{H})$ such that \cite{Kraus}
\begin{equation}
 \left \langle x, \sum_i V^\dagger_i V_i x \right \rangle \leqslant \langle x, x \rangle \quad \forall x \in \mathcal{H}, \nonumber
\end{equation}
or equivalently
\begin{equation}
 \sum_i V^\dagger_i V_i \leqslant 1 \label{traceno}
\end{equation}
is called a quantum operation. The index set of the summation in \eqref{phi} and \eqref{traceno} can be in principle uncountable, which means that the ancilla Hilbert space used for the 
derivation of the quantum operation is non-separable, see p. $318$ in Ref. \onlinecite{Kraus}. The summation in \eqref{phi} is convergent with respect to the trace norm $\|\cdot\|_1$ and 
$ \sum_i V^\dagger_i V_i$ is convergent with respect to the ultraweak topology on $\mathcal{B}(\mathcal{H})$ 
generated by the seminorms
\begin{equation}
 \mathcal{B}(\mathcal{H}) \rightarrow \mathbb{R}_+, \quad X \rightarrow \big| \mathrm{Tr}\{XA\} \big| \quad \text{with}\quad A \in \mathcal{B}_1(\mathcal{H}). \nonumber
\end{equation}
The proof can be found in Lemma $2.1$ of Ref. \onlinecite{Kraus}. If $\sum_i V^\dagger_i V_i=1$ then $\phi$ is trace preserving. A quantum operation is a 
completely positive map due to the involvement of partial trace over the ancilla Hilbert space in its derivation \cite{Hellwig}.

Any $X \in \mathcal{B}_1(\mathcal{H})$ can be written as 
\begin{equation}
 X/\|X\|_1= \rho_1-\rho_2+i \left(\rho_3-\rho_4\right) \nonumber
\end{equation}
with four density operators $\rho_i\in \mathcal{D}(\mathcal{H})$ ($i \in {1,2,3,4}$). Therefore, $\phi$ can be extended to a linear mapping of $\mathcal{B}_1(\mathcal{H})$ into itself and the extension,
denoted by $\phi$ again, is defined by
\begin{equation}
 \phi(X)=\sum_i V_i X V^\dagger_i, \quad  \forall X \in \mathcal{B}_1(\mathcal{H}), \nonumber 
\end{equation}
where the summation is convergent with respect to the trace norm $\|\cdot\|_1$, see Lemma $2.1$ in \onlinecite{Kraus}.
 
Every bounded linear functional on $\mathcal{B}_1(\mathcal{H})$ is of the form $X \rightarrow \mathrm{Tr}\{XA\}$ with $A \in \mathcal{B}(\mathcal{H})$. This establishes a one-to-one correspondence
preserving linearity and the norm between $\mathcal{B}(\mathcal{H})$ and the dual of $\mathcal{B}_1(\mathcal{H})$, see Theorem $2$ on p. $47$ in Ref. \onlinecite{Schatten}. Thus, $\mathcal{B}(\mathcal{H})$ 
is isometrically isomorphic to the dual of 
$\mathcal{B}_1(\mathcal{H})$. Hence, the adjoint map $\phi^*$ to $\phi$ of $\mathcal{B}_1(\mathcal{H})$ is defined for arbitrary $A \in \mathcal{B}(\mathcal{H})$ by
\begin{equation}
 \mathrm{Tr}\{\phi(X)A\}=\mathrm{Tr}\{X\phi^*(A)\}, \nonumber
\end{equation}
where
\begin{equation}
\phi^*(A)=\sum_i V^\dagger_i A V_i. \label{phics} 
\end{equation}
According to Lemma $3.1$ in \onlinecite{Kraus}, $\phi^*$ is also a completely positive map and the summation in \eqref{phics} is convergent with respect to the ultraweak topology on $\mathcal{B}(\mathcal{H})$.

The aim of this paper is to investigate the behavior of the iterates $\phi^n$ and $\left(\phi^*\right)^n$ with the help of mean ergodic theorems.  
Classical mean ergodic theorems \cite{Krengel} are revisited for this purpose. As both $\phi$ and $\phi^*$ generate abelian discrete semigroups one may 
use the splitting theorem of Jacobs-DeLeeuw-Glicksberg \cite{Ja1, Ja2, DG1, DG2} by assuming that the semigroups have relatively weakly compact orbits, which is vital to the theory, see Eberlein in 
Ref. \onlinecite{Eber}. Here, we do not assume this property and we focus only on the mean ergodic property of $\phi$ and $\phi^*$, which even it is fulfilled does not guarantee that these
operators generate relatively weakly compact semigroups, see Example $8.27$ in \onlinecite{EFHN}. Another important direction with a considerable literature is the investigation of 
Schwarz maps or semigroups on  $C^*$- and von Neumann algebras \cite{Nagel} and this is closely related to quantum operations, because completely positive maps are Schwarz maps \cite{Groh1, Groh2}. This topic has a 
considerable literature and mean ergodicity can be obtained either via the Gelfand-Naimark-Segal construction \cite{Batkai}, where the semigroup on the algebra is extended to a semigroup of contractions on a 
Hilbert space, or in terms of the ergodic and quasi-ergodic projections \cite{Luczak}. These investigations consider not only discrete semigroups but also one-parameter semigroups and 
an overview of these results and further references can be found in Ref. \onlinecite{Emel}. The topic of convergence rates in mean ergodic theorems concerning both discrete and one-parameter semigroups 
is also an under active investigations \cite{Butzer, Gomilko1, Gomilko2}.

Asymptotic behavior of quantum operations has found application in many physical models. Fixed points of quantum operations have already been studied in one-dimensional quantum spin chains \cite{Bratteli} or in the 
context of quantum measurements \cite{Arias, Nagy}. Quantum operations may have a decoherence free subspace or in the spirit Jacobs-DeLeeuw-Glicksberg decomposition a reversible part of quantum states and this 
has also been investigated for open quantum walks \cite{Carbone}. In finite dimension, where mean ergodicity is immediate, properties of quantum operations have been discussed in detail \cite{Jaroslav, Wolff}.

In this paper, we shall be concerned with the mean ergodicity of $\phi$ and $\phi^*$ and we show 
that these completely positive maps can also be defined on the Hilbert space of Hilbert-Schmidt operators. This combined with some classical mean ergodic theorems results
that these maps on the space of Hilbert-Schmidt operators are mean ergodic. Then, some known results are revisited in order to characterize the fixpoints $\phi$ and $\phi^*$ and 
two supporting examples are provided. Since mean ergodicity of $\phi$ and $\phi^*$ is not guaranteed in general, therefore we revisit a classical result of Yosida and Kakutani in \onlinecite{Kakutani} without the 
assumption of weakly compactness of the power bounded operator. This is then applied to $\phi$ and $\phi^*$ in order to define the projectors of the subspace of reversible vectors and 
we recall also the uniform ergodic theory in order to clarify the circumstances, when the iterates $\phi^n$ and $\left(\phi^*\right)^n$ vanish on the complement of the reversible part with respect to 
the closed subspace, on which Ces\`{a}ro averages of these maps are defined. The special case of $\phi$ and $\phi^*$, when these maps are defined on the space of Hilbert-Schmidt operators, 
has the most potential to be applied in different physical models and it is pointed out to be a simple consequence of the results of Yosida and Kakutani or of the splitting theorem of 
Jacobs-DeLeeuw-Glicksberg. Here, we also provide examples of quantum operations
on both finite and infinite dimensional Hilbert spaces. 

This article is organized as follows. Fix points and mean ergodicity of $\phi$ and $\phi^*$ are the subjects of Sec. \ref{II}. It will be shown in Sec. \ref{III} that this approach
plays an important role in the asymptotic properties of the iterates for large $n$. As our problem is formulated on Banach spaces of linear operators on general separable Hilbert spaces 
we need to use several deep results from the theory of operators.
 
\section{Mean ergodic theorems}
\label{II}

Let us fix some notations. An operator $T: \mathcal{B} \rightarrow \mathcal{B}$ is called power bounded on a Banach space $\mathcal{B}$
if the norms of the powers $T^n$ ($n\geqslant0$) are uniformly bounded.
\begin{proposition}
\label{prop1}
The two completely positive maps $\phi: \mathcal{B}_1(\mathcal{H}) \rightarrow \mathcal{B}_1(\mathcal{H})$ and $\phi^*: \mathcal{B}(\mathcal{H}) \rightarrow \mathcal{B}(\mathcal{H})$ 
have operator norm at most $1$ and are thus power bounded.
\end{proposition}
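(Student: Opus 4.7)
The plan is to prove $\|(\phi^*)^n\|\leqslant 1$ for all $n\geqslant 0$ directly from the Kraus form, and then transfer the bound to $\phi$ via the duality between $\mathcal{B}_1(\mathcal{H})$ and $\mathcal{B}(\mathcal{H})$ already set up in the introduction. This avoids the trouble of bounding $\|\phi(X)\|_1$ directly for non-self-adjoint $X\in\mathcal{B}_1(\mathcal{H})$, where the naive decomposition $X=\mathrm{Re}\,X+i\,\mathrm{Im}\,X$ only gives a factor of $2$.

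First I would show $\|\phi^*\|\leqslant 1$. The cleanest route is to build a Stinespring-type isometry: define the bounded operator $V\colon\mathcal{H}\to\mathcal{H}\otimes\ell^2$ by $Vx=\sum_i (V_ix)\otimes e_i$, where convergence is in the strong sense guaranteed by $\sum_i V_i^\dagger V_i\leqslant 1$. Then $V^\dagger V=\sum_i V_i^\dagger V_i\leqslant 1$, so $\|V\|\leqslant 1$, and a short computation using \eqref{phics} gives
\begin{equation}
\phi^*(A)=V^\dagger(A\otimes I)V,\qquad A\in\mathcal{B}(\mathcal{H}).\nonumber
\end{equation}
Consequently $\|\phi^*(A)\|\leqslant \|V\|^2\|A\|\leqslant\|A\|$, so $\|\phi^*\|\leqslant 1$. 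Submultiplicativity of the operator norm then yields $\|(\phi^*)^n\|\leqslant\|\phi^*\|^n\leqslant 1$ for every $n\geqslant 0$.

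Second, I would use that $\mathcal{B}(\mathcal{H})$ is isometrically isomorphic to the dual of $\mathcal{B}_1(\mathcal{H})$, a fact the paper has already invoked, so $\phi^*$ is the Banach-space adjoint of $\phi$. Standard duality gives $\|\phi\|=\|\phi^*\|$, and since $(\phi^n)^*=(\phi^*)^n$ from the defining trace identity, the same relation holds for all iterates: $\|\phi^n\|=\|(\phi^*)^n\|\leqslant 1$. This proves both halves of the proposition simultaneously, with uniform bound $1$.

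The principal obstacle is the first step, namely obtaining $\|\phi^*\|\leqslant 1$ rather than $\|\phi^*\|\leqslant 2$. For an arbitrary positive linear map between $C^*$-algebras one only has $\|T\|\leqslant 2\|T(1)\|$, and the sharper equality $\|T\|=\|T(1)\|$ requires additional structure; here complete positivity (equivalently the Kraus representation) is exactly what supplies the isometric dilation above and therefore the contractive bound. Everything else is formal: submultiplicativity for the iterates and Banach-space duality for passing between $\phi$ and $\phi^*$.
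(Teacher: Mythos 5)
Your proof is correct, but the first half takes a genuinely different route from the paper's. To get $\|\phi^*\|\leqslant 1$ you build the isometric (Stinespring-type) dilation $V\colon\mathcal{H}\to\mathcal{H}\otimes\ell^2$, $Vx=\sum_i (V_ix)\otimes e_i$, note $V^\dagger V=\sum_i V_i^\dagger V_i\leqslant I$, and read off $\phi^*(A)=V^\dagger(A\otimes I)V$, so the contractive bound is immediate from submultiplicativity of the operator norm. The paper instead argues representation-free: it invokes the Russo--Dye result $\|\phi^*\|=\sup_U\|\phi^*(U)\|$ over unitaries, and then controls $\|\phi^*(U)\|^2=\|\phi^*(U^\dagger)\phi^*(U)\|$ by the Kadison--Schwarz inequality \eqref{Kadison}, which follows from $2$-positivity, obtaining $\|\phi^*(U)\|^2\leqslant\|\phi^*(I)\|^2\leqslant 1$. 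Your approach is more elementary and self-contained given that the Kraus form \eqref{phics} is already in hand (its only extra cost is checking that the series defining $V$ converges, which $\sum_i V_i^\dagger V_i\leqslant I$ does supply); the paper's approach needs no explicit Kraus decomposition and would apply to any $2$-positive map with $\|\phi^*(I)\|\leqslant 1$ on a general $C^*$-algebra. The second half --- transferring the bound to $\phi$ on $\mathcal{B}_1(\mathcal{H})$ via $\mathrm{Tr}\{A\phi(X)\}=\mathrm{Tr}\{\phi^*(A)X\}$ and the duality $\|X\|_1=\sup_{\|A\|=1}|\mathrm{Tr}\{AX\}|$ --- is essentially identical to the paper's. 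One minor quibble: your closing remark that a positive map only satisfies $\|T\|\leqslant 2\|T(I)\|$ understates the cited Russo--Dye corollary, which gives $\|T\|=\|T(I)\|$ for positive maps on unital $C^*$-algebras; this does not affect the validity of your argument.
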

\begin{proof}
By Lemma $3.1$ in Ref. \onlinecite{Paulsen} the matrix
 \begin{equation}
  \begin{pmatrix}
     I & X \\
     X^\dagger & X^\dagger X  
     \end{pmatrix}=\begin{pmatrix}
     I & 0 \\
     X^\dagger & 0  \end{pmatrix} \begin{pmatrix}
     I & X \\
     0 & 0  
     \end{pmatrix} \geqslant 0, \quad \forall X \in \mathcal{B}(\mathcal{H}) \nonumber
 \end{equation}
with $I$ being the identity operator on $\mathcal{H}$ and by applying the positive map $\phi^{(2)}$ we have
\begin{equation}
\phi^{*}(X^\dagger)\phi^{*}(X)\leqslant  \|\phi^{*}(I)\|\phi^{*}(X^\dagger X),  \label{Kadison}
\end{equation}
where we have used $\|\phi^{*}(I)\| I \geqslant \phi^{*}(I)$, because $\phi^{*}(I)$ is a positive operator. From \eqref{Kadison} follows that $\|\phi^{*}(U)\| \leqslant \|\phi^{*}(I)\|$ if $U$ is unitary.
From Corollary $1$ of Ref. \onlinecite{Russo} we know that the operator norm $\|\cdot\|_{\mathcal{B}(\mathcal{H})}$ of $\phi^*$ is equal to $\sup_U \|\phi^{*}(U)\|$ where supremum is over the unitaries
in $\mathcal{B}(\mathcal{H})$. Consequently
\begin{equation}
 \|\phi^* \|_{\mathcal{B}(\mathcal{H})}=\|\phi^{*}(I)\|=\|\sum_i V^\dagger_i V_i \| \leqslant 1. \nonumber
\end{equation}
For a more general argumentation see Theorem $1.3.3$ in Ref. \onlinecite{Stormer}. Theorem $1.10.12$ of \onlinecite{Megginson} says that the bounded linear map $\phi$ and its dual $\phi^*$ have the
same norm $\|\phi^* \|_{\mathcal{B}(\mathcal{H})}=\|\phi \|_{\mathcal{B}_1(\mathcal{H})}$, where $\|\cdot\|_{\mathcal{B}_1(\mathcal{H})}$ is the operator norm  of $\phi$, and by thus $\phi$ is also a contraction.
\end{proof}

In the subsequent part we discuss very general mean ergodic theorems for a power bounded operator $T: \mathcal{B} \rightarrow \mathcal{B}$ on a Banach space $\mathcal{B}$. 
We are interested in the convergence of the averages $A_n(T)=n^{-1} \sum^n_{i=1} T^i$,
the so-called Ces\`{a}ro mean of the first $n$ iterates of $T$. We introduce the following two closed linear subspaces \cite{Krengel}:
\begin{equation}
 \mathcal{B}_{\text{me}}(T)=\{x \in \mathcal{B}: \lim A_n(T)x \,\, \text{exists}\}, \nonumber
\end{equation}
and the fixed space of $T$
\begin{equation}
\text{F}(T)=\{x \in \mathcal{B}: T x=x\}=\text{Ker}(I-T). \nonumber
\end{equation}
Let $\mathcal{B}'$ be the space of all continuous functionals on $\mathcal{B}$, which is also a Banach space and called the dual space of $\mathcal{B}$. 
The adjoint operator $T^* : \mathcal{B}' \rightarrow \mathcal{B}' $ of $T$ is defined by
$f(Tx)=(T^*f)(x)$ for $f \in \mathcal{B}'$ and $x \in \mathcal{B}$. Then we have the following splitting theorem \cite{Yosida}
\begin{theorem}
 \label{theorem1}
 Let $T$ be a power bounded linear operator on a Banach space $\mathcal{B}$. Then 
 \begin{equation}
  \mathcal{B}_{\text{me}}(T)=\text{F}(T) \oplus \overline{\text{Rng}(I-T)}. \nonumber
 \end{equation}
The linear operator $\mathcal{P}x=\lim A_n(T) x$ assigned to $x \in \mathcal{B}_{\text{me}}(T)$ is the projection on $\mathcal{B}_{\text{me}}(T)$
onto $\text{F}(T)$ along $\overline{\text{Rng}(I-T)}$. We have $\mathcal{P}=\mathcal{P}^2=T\mathcal{P}=\mathcal{P}T$, and for every $y\in \mathcal{B}$ the assertions
\begin{enumerate}[label=\alph*)]
 \item $\lim A_n(T) y=0$
 \item $y \in \overline{\text{Rng}(I-T)}$
 \item $f(y)=0$ for all $f \in \{g \in \mathcal{B}' : T^*g=g\}$
\end{enumerate}
are equivalent.
\end{theorem}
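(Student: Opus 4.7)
The plan rests on two elementary telescoping identities valid for every $x \in \mathcal{B}$:
\begin{equation*}
A_n(T)(I-T)x = \tfrac{1}{n}\bigl(Tx - T^{n+1}x\bigr), \qquad \bigl(I - A_n(T)\bigr)x = \tfrac{1}{n}\sum_{i=1}^n (I-T)\sum_{k=0}^{i-1}T^k x.
\end{equation*}
Writing $M := \sup_{n \geqslant 0}\|T^n\| < \infty$, which is finite by power boundedness, the first identity gives $\|A_n(T)(I-T)x\| \leqslant \tfrac{2M}{n}\|x\| \to 0$; since $\|A_n(T)\| \leqslant M$, a standard three-$\varepsilon$ argument extends this to $\lim A_n(T)y = 0$ for every $y \in \overline{\text{Rng}(I-T)}$. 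Combining this with the trivial observation $A_n(T)x = x$ on $\text{F}(T)$, I obtain $\text{F}(T) + \overline{\text{Rng}(I-T)} \subseteq \mathcal{B}_{\text{me}}(T)$. The sum is direct because any element in the intersection satisfies $x = \lim A_n(T)x = 0$.

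For the converse inclusion, I fix $x \in \mathcal{B}_{\text{me}}(T)$ and set $\mathcal{P}x := \lim_n A_n(T)x$. The bound $\|TA_n(T)x - A_n(T)x\| = \tfrac{1}{n}\|T^{n+1}x - Tx\| \to 0$ forces $T\mathcal{P}x = \mathcal{P}x$, so $\mathcal{P}x \in \text{F}(T)$. The second identity shows $(I - A_n(T))x \in \text{Rng}(I-T)$ for every $n$; passing to the norm limit, $x - \mathcal{P}x \in \overline{\text{Rng}(I-T)}$, completing the decomposition. The projection identities follow at once: $\mathcal{P}x \in \text{F}(T)$ makes $\mathcal{P}^2 = \mathcal{P}$ and $T\mathcal{P} = \mathcal{P}$ automatic, while $A_n(T)(Tx) = T A_n(T)x \to T\mathcal{P}x = \mathcal{P}x$ yields $\mathcal{P}T = \mathcal{P}$.

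The equivalence (a)$\Leftrightarrow$(b) then follows directly from the decomposition, since $\mathcal{P}y = 0$ iff the $\text{F}(T)$-component of $y$ vanishes, which by directness of the sum is equivalent to $y \in \overline{\text{Rng}(I-T)}$. For (b)$\Leftrightarrow$(c) I would invoke the Hahn--Banach theorem: a closed subspace equals the annihilator of its annihilator, so $y \in \overline{\text{Rng}(I-T)}$ iff $f(y) = 0$ for every $f \in \mathcal{B}'$ with $f((I-T)x) = 0$ for all $x \in \mathcal{B}$; this last condition unwinds to $(T^*f)(x) = f(Tx) = f(x)$, i.e.\ $T^*f = f$. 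I expect the main conceptual obstacle to be the converse inclusion in the decomposition: both telescoping identities together with power boundedness are needed in order simultaneously to place $\mathcal{P}x$ inside $\text{F}(T)$ and to recognise that the residual $x - \mathcal{P}x$ actually lies in the norm closure of $\text{Rng}(I-T)$, rather than the range itself, which in infinite dimensions is generally not closed.
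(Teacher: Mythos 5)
Your proof is correct and complete: the two telescoping identities, the uniform bound $\|A_n(T)\|\leqslant M$, and the Hahn--Banach annihilator argument together give exactly the classical Yosida splitting argument, which is what the paper's proof delegates wholesale to Krengel (Theorem 1.3) without reproducing it. Since the paper offers no independent argument, there is nothing to contrast; your write-up simply supplies the standard proof the citation points to.
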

\begin{proof}
 The theorem is mostly due to K. Yosida \cite{Yosida} and proved in \onlinecite{Krengel}, see Theorem $1.3$ on p. $73$-$74$.  
\end{proof}
If $\mathcal{B}_{\text{me}}(T)=\mathcal{B}$ then $T$ is called mean ergodic. A Banach space $\mathcal{B}$ is reflexive, when the canonical embedding map from $\mathcal{B}$ into 
the double dual $\mathcal{B}''$ is surjective. Power bounded linear operators on a reflexive Banach space are mean ergodic \cite{Lorch}. The approach of K. Yosida is based on weakly compact linear operators, 
which map the unit ball of the Banach space on a relatively weakly compact set \cite{Yosida}. Every bounded linear operator defined on a reflexive Banach space is weakly compact. One is able to construct a 
non-reflexive Banach space such that all contractions 
on the space are mean ergodic \cite{Fonf}, however this topic is outside the scope of this paper. As both $\mathcal{B}_1(\mathcal{H})$ and $\mathcal{B}(\mathcal{H})$ are not 
reflexive in general, the maps $\phi$ and $\phi^*$ are not always mean ergodic. 

The set of Hilbert-Schmidt operators
\begin{equation}
\mathcal{B}_2(\mathcal{H})=\{X \in \mathcal{B}(\mathcal{H}): \|X\|_2=\mathrm{Tr}\{ X^\dagger X\} <\infty \} \nonumber
\end{equation}
is a Banach space and with the scalar product $\langle X, Y\rangle_{\text{HS}}=\mathrm{Tr}\{ X^\dagger Y\}$ becomes a Hilbert space, which means that it is a reflexive space. In fact, 
$\mathcal{B}_2(\mathcal{H})$ is a two sided $*$-closed ideal in $\mathcal{B}(\mathcal{H})$ and
\begin{equation}
\mathcal{B}_1(\mathcal{H})\subseteq\mathcal{B}_2(\mathcal{H})\subseteq\mathcal{B}(\mathcal{H}). \label{tart} 
\end{equation}

\begin{corollary}
Let $\phi^*_2$ be the restriction of $\phi^*$ to $\mathcal{B}_2(\mathcal{H})$.  If $\|\sum_i V_i V^\dagger_i \| \leqslant 1$ then $\phi^*_2$ is mean ergodic. \label{coru}
\end{corollary}
\begin{proof}
By Proposition \ref{prop1} $\phi^*$ is a contraction and $\phi^{*}(X^\dagger)\phi^{*}(X)\leqslant \phi^{*}(X^\dagger X)$ for all $X \in \mathcal{B}(\mathcal{H})$. 
Since $\phi^{*}(X^\dagger)=\phi^{*}(X)^\dagger$, we have for every $X \in \mathcal{B}_2(\mathcal{H})$
\begin{equation}
 \|\phi^*(X)\|^2_2=\mathrm{Tr}\left\{\phi^{*}(X)^\dagger \phi^{*}(X)\right\}\leqslant \mathrm{Tr}\left\{\phi^{*}(X^\dagger X)\right\}=\mathrm{Tr}\left\{X^\dagger X \sum_i V_i V^\dagger_i \right\}
\end{equation}
and $X^\dagger X \in \mathcal{B}_1(\mathcal{H})$. Thus
\begin{equation}
\mathrm{Tr}\left\{X^\dagger X \sum_i V_i V^\dagger_i \right\} \leqslant \|X^\dagger X\|_1 \|\sum_i V_i V^\dagger_i \| \leqslant \mathrm{Tr}\left\{X^\dagger X\right\}=\|X\|^2_2.
\end{equation}
Hence, $\phi^*_2= \phi^* \big|_{\mathcal{B}_2(\mathcal{H})}$ is also a contraction and thus power bounded. The Banach space $\mathcal{B}_2(\mathcal{H})$ is a Hilbert space, hence it is reflexive. 
The theorem of Lorch \cite{Lorch} implies the mean ergodicity of $\phi^*_2$.
\end{proof}

\begin{remark}
It is interesting to note that the adjoint of $\phi^*_2$ is
\begin{equation}
 \phi_2(X)=\sum_i V_i X V^\dagger_i,  \nonumber
\end{equation}
where we have used the properties of the inner product $\langle X, Y\rangle_{\text{HS}}$. Thus, the map in \eqref{phi} can also be extended
to the space of Hilbert-Schmidt operators. 
\end{remark}

\begin{corollary}
\label{cor2}
If $\|\sum_i V_i V^\dagger_i \| \leqslant 1$ then the maps $\phi^*_2$ and $\phi_2$ have the property $\text{F}(\phi^*_2)=\text{F}(\phi_2)$.
\end{corollary}
\begin{proof}
We have shown in Corollary \ref{coru} that $\phi^*_2$ is a contraction. Since $\|\phi^*_2\|_{\mathcal{B}_2(\mathcal{H})}=\|\phi_2\|_{\mathcal{B}_2(\mathcal{H})}$ both $\phi^*_2$ and $\phi_2$ 
are contractions on $\mathcal{B}_2(\mathcal{H})$. 
Then the proof follows from 
Theorem $8.6$ and Corollary $8.7$ in Ref. \onlinecite{EFHN} and here
we give a sketch of the arguments. Take $X \in \text{F}(\phi^*_2)$ which implies
\begin{equation}
 \langle X, \phi_2(X)\rangle_{\text{HS}}=\langle \phi^*_2(X), X\rangle_{\text{HS}}=\|X\|^2_2. \nonumber
\end{equation}
Since $\phi_2$ is a contraction, it follows that
\begin{eqnarray}
 \|\phi_2(X)-X\|^2_2&=&\|\phi_2(X)\|^2_2-2\text{Re} \langle X, \phi_2(X)\rangle_{\text{HS}}+\|X\|^2_2 \nonumber \\
 &=&\|\phi_2(X)\|^2_2-\|X\|^2_2 \leqslant 0. \nonumber
\end{eqnarray}
Thus, $\phi_2(X)=X$ or $X \in \text{F}(\phi_2)$. Consequently, $\text{F}(\phi^*_2) \subseteq \text{F}(\phi_2)$. The opposite inclusion is proven in the same fashion, but 
starting with $X \in \text{F}(\phi_2)$.
\end{proof}

Now let us give examples of fixed spaces.

\begin{example}
Let $\mathcal{H}=\mathbb{C}^2$ with its standard basis $e_0=\begin{pmatrix}
    1\\
    0
  \end{pmatrix} $ and $e_1=\begin{pmatrix}
    0\\
    1
  \end{pmatrix} $. We consider the following two Pauli matrices
\begin{equation}
\sigma_x=\begin{pmatrix} 0&1\\1&0 \end{pmatrix}, \quad \sigma_y=\begin{pmatrix} 0&-i\\i&0 \end{pmatrix}, \nonumber
\end{equation}
and the quantum operation
\begin{eqnarray}
&&\phi(X)=V_1 X V^\dagger_1 + V_2 X V^\dagger_2, \nonumber \\
&&V_1=\sqrt{p} \sigma_x, \quad V_2=\sqrt{1-p} \sigma_y, \nonumber
\end{eqnarray}
where $p\in [0,1]$ and $X \in \mathcal{M}_{2}(\mathbb{C})$, i.e., the set of all $2 \times 2$ matrices over $\mathbb{C}$. Note that $\phi^*=\phi$.

If $p=0$ then
\begin{equation}
 \text{F}(\phi)=\left\{\begin{pmatrix} a&-b\\b&a \end{pmatrix}, \quad  \forall a,b \in \mathbb{C} \right\}. \nonumber
\end{equation}

If $p \in (0,1)$ then
\begin{equation}
 \text{F}(\phi)=\left\{\begin{pmatrix} a&0\\0&a \end{pmatrix}, \quad  \forall a \in \mathbb{C} \right\}. \nonumber
\end{equation}

If $p=1$ then
\begin{equation}
 \text{F}(\phi)=\left\{\begin{pmatrix} a&b\\b&a \end{pmatrix}, \quad  \forall a,b \in \mathbb{C} \right\}. \nonumber
\end{equation}
\end{example}

\begin{example} \label{ex1}
Let $\mathcal{H}=\ell^2$, the space of square-summable sequences. The left and right shift operators are defined by
\begin{eqnarray}
&&S_L : (a_1, a_2, a_3, a_4 \dots) \mapsto (a_2, a_3, a_4, a_5 \dots) , \nonumber \\
&&S_R : (a_1, a_2, a_3, a_4 \dots) \mapsto (0, a_1, a_2, a_3, \dots), \nonumber
\end{eqnarray}
and using these contractions we construct the following quantum operation
\begin{eqnarray}
&&\phi(X)=V_1 X V^\dagger_1 + V_2 X V^\dagger_2, \nonumber \\
&&V_1=\sqrt{p} S_L, \quad V_2=\sqrt{1-p} S_R, \quad p\in (0,1), \nonumber
\end{eqnarray}
where $X \in \mathcal{B}_1(\ell^2)$. It is immediate that $\phi$ is a contraction and
\begin{equation}
 \left \langle x, \sum_{i=1,2}V^\dagger_iV_i x \right \rangle= \left \langle x, \left(p S^\dagger_L S_L+ (1-p) S^\dagger_R S_R\right) x \right \rangle < \langle x, x \rangle, \quad \forall x \in \ell^2, \nonumber
\end{equation}
or equivalently
\begin{equation}
 \sum_{i=1,2}V^\dagger_iV_i= p S^\dagger_L S_L+ (1-p) S^\dagger_R S_R < I \nonumber
\end{equation}
with $I$ being the identity operator on $\ell^2$. We consider first $X$ to be
\begin{equation}
 X : (a_1, a_2, a_3, a_4 \dots) \mapsto (x_1 a_1, x_2 a_2, x_3 a_3, \dots) \quad \text{with} \quad \sum^\infty_{i=1} |x_i| < \infty\nonumber
\end{equation}
and $p$ has a fixed value. Then $\phi(X)=X$ has the following solution
\begin{equation}
 x_i=x_1 \frac{f^{(i)}(p)}{p^{i-1}}, \quad i>1, \nonumber
\end{equation}
where
\begin{eqnarray}
 f^{(i)}(p)&=&\sum^{i-1}_{j=0} (-1)^j p^j a^{(i)}_j, \nonumber \\
 a^{(i)}_j&=&\begin{cases}
            1,& j=0,\\
            a^{(i-1)}_j+a^{(i-1)}_{j-1},& 0< j <i-1,\\
            1, & j=i-1 \quad \text{and} \quad i \text{ is odd}, \\
            0, & j=i-1 \quad \text{and} \quad i \text{ is even}.
           \end{cases} \nonumber
\end{eqnarray}
We find that $f^{(i)}(p)/p^{i-1} > 1 $ and one obtains by induction the following relation 
\begin{equation}
 \frac{f^{(i+1)}(p)}{p^{i}}-\frac{f^{(i)}(p)}{p^{i-1}}=\frac{(1-p)^i}{p^i}>0, \quad i>1. \nonumber
\end{equation}
It is immediate that the solution to $\phi(X)=X$ is not a trace class operator, and even more so is not bounded. Extending the presented approach to more general
$X$ operators, 
\begin{equation}
 X : (a_1, a_2, a_3, a_4 \dots) \mapsto \left(\sum^\infty_{i=1} x_{1i} a_i, \sum^\infty_{i=1} x_{2i} a_i, \sum^\infty_{i=1} x_{3i} a_i, \dots\right ) \nonumber
\end{equation}
we find $\phi(X)=X$ has the following solution
\begin{equation}
x_{ij}=x_{1j} \frac{f^{(i)}(p)}{p^{i-1}}, \quad i>1 \quad \text{and} \quad j\geqslant 1.  \nonumber 
\end{equation}
Therefore,
\begin{equation}
\text{F}(\phi)=\{0\}, \nonumber 
\end{equation}
here $0$ denotes the null operator.
\end{example}
\begin{remark}
Both averages $A_n(\phi)$ and $A_n(\phi^*)$ can be evaluated for elements of $\mathcal{D}(\mathcal{H})$ due to 
$\mathcal{D}(\mathcal{H}) \subset \mathcal{B}_1(\mathcal{H}) \subseteq \mathcal{B}(\mathcal{H})$. However, there might be cases when
\begin{eqnarray}
\mathcal{D}(\mathcal{H}) \cap \text{F}(\phi) = \emptyset,  \nonumber \\
\mathcal{D}(\mathcal{H}) \cap \text{F}(\phi^*) = \emptyset.  \nonumber
\end{eqnarray}
If $\sum_i V^\dagger_i V_i=1$ then 
\begin{equation}
 \{X \in \mathcal{B}_1(\mathcal{H}): A_iX=XA_i \quad \text{for all} \,i \}\subseteq \text{F}(\phi) \nonumber
\end{equation}
and
\begin{equation}
 \{X \in \mathcal{B}(\mathcal{H}): A_iX=XA_i \quad \text{for all} \, i \}\subseteq \text{F}(\phi^*). \nonumber
\end{equation}
For more details, see Ref. \onlinecite{Arias}.
\end{remark}

We shall now discuss a particular property of completely positive maps. As the following argumentation is the same for both $\phi$ and  $\phi^*$, we consider the case of $\phi^*$. 
The following result is known and we formulate it in the context discussed here in this manuscript. 

\begin{lemma}
Let $p_i$ be strictly positive numbers with $\sum_i p_i=1$ and let $\phi^*_i$ be commuting completely positive maps. If $\phi^*=\sum_i p_i \phi^*_i$ then
\begin{equation}
 \text{F}(\phi^*)= \bigcap_i \text{F}(\phi^*_i). \nonumber
\end{equation}
\end{lemma}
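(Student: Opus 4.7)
The plan is to prove the two inclusions making up $\text{F}(\phi^*) = \bigcap_i \text{F}(\phi^*_i)$ separately. The easy direction $\bigcap_i \text{F}(\phi^*_i) \subseteq \text{F}(\phi^*)$ is immediate from linearity: if every $\phi^*_i$ fixes $X$, then
\[
\phi^*(X) = \sum_i p_i \phi^*_i(X) = \Big(\sum_i p_i\Big) X = X,
\]
using only $\sum_i p_i = 1$, and neither commutativity nor positivity of the maps is required.

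The reverse inclusion is the substantive content, and commutativity is the pivotal tool. Given $X \in \text{F}(\phi^*)$, I would first show that $\phi^*_j$ preserves the fixed space: using $\phi^*_i \phi^*_j = \phi^*_j \phi^*_i$,
\[
\phi^*(\phi^*_j(X)) = \sum_i p_i \phi^*_i \phi^*_j(X) = \phi^*_j\Big(\sum_i p_i \phi^*_i(X)\Big) = \phi^*_j(X),
\]
so each $\phi^*_j(X) \in \text{F}(\phi^*)$ and $\phi^*_j$ restricts to a contraction on this invariant subspace. Since completely positive maps preserve adjoints, I may assume $X = X^\dagger$ by decomposing into real and imaginary parts. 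I then combine the Kadison--Schwarz inequality \eqref{Kadison} applied to each $\phi^*_i$ with the operator convexity estimate $\bigl(\sum_i p_i Y_i\bigr)^2 \leqslant \sum_i p_i Y_i^2$ for self-adjoint $Y_i = \phi^*_i(X)$ (which follows by expanding the manifestly positive sum $\sum_{i,j} p_i p_j (Y_i - Y_j)^2 \geqslant 0$) to obtain
\[
X^2 = \phi^*(X)^2 \leqslant \sum_i p_i \phi^*_i(X)^2 \leqslant \sum_i p_i \phi^*_i(X^2) = \phi^*(X^2).
\]
Equality in the first step is equivalent to $\phi^*_i(X) = \phi^*_j(X)$ for all $i,j$, which combined with the strict convex combination $X = \sum_i p_i \phi^*_i(X)$ and $p_i > 0$ forces every $\phi^*_i(X) = X$, completing the inclusion.

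The main obstacle is ensuring that the displayed chain actually collapses to equalities, i.e.\ establishing $\phi^*(X^2) = X^2$. I would approach this by iterating: the sequence $\{(\phi^*)^n(X^2)\}_{n\geqslant 0}$ is monotone increasing in the operator order and uniformly norm-bounded by the power-boundedness of $\phi^*$ (Proposition \ref{prop1}), so it converges in the strong operator topology to a $\phi^*$-fixed $Y \geqslant X^2$. Closing the remaining gap between $X^2$ and $Y$ is the delicate point; I expect it to require either a testing argument against fixed elements of the predual channel $\phi$ supplied by Theorem \ref{theorem1}, or an appeal to the mean ergodic decomposition applied to $X^2$, in either case using the strict positivity of the weights $p_i$ to transfer rigidity from the averaged map $\phi^*$ back to each individual $\phi^*_i$.
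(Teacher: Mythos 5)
Your first inclusion and the observation that each $\phi^*_j$ leaves $\text{F}(\phi^*)$ invariant are correct, but the substantive direction is not closed, and the route you propose for closing it cannot work. Everything funnels through the identity $\phi^*(X^2)=X^2$ for a self-adjoint $X\in\text{F}(\phi^*)$, which you explicitly leave open; this is not a removable technicality, because the fixed space of a completely positive map is in general \emph{not} closed under squaring. Concretely, on $\mathcal{B}(\ell^2)$ take the shifts of Example \ref{ex1} and set $\phi^*_1(A)=S^\dagger_R A S_R$, $\phi^*_2(A)=(S^\dagger_R)^2 A S^2_R$. These are commuting, unital, completely positive contractions, and $X=S_R+S^\dagger_R$ is a common self-adjoint fixed point, hence a fixed point of $\phi^*=\tfrac12(\phi^*_1+\phi^*_2)$. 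Yet $X^2=S_R^2+(S_R^\dagger)^2+2I-E_0$, where $E_0$ is the rank-one projection onto the first basis vector, and a direct computation gives $\phi^*(X^2)=X^2+E_0\neq X^2$; moreover your monotone iteration stabilizes at $Y=X^2+E_0>X^2$, so the ``gap between $X^2$ and $Y$'' is genuinely nonzero even though the lemma's conclusion holds here. Consequently the chain $X^2\leqslant\sum_i p_i\,\phi^*_i(X)^2\leqslant\phi^*(X^2)$ need not collapse to equalities, and the desired conclusion $\phi^*_i(X)=\phi^*_j(X)$ does not follow. Two smaller points: the Kadison--Schwarz step $\phi^*_i(X)^2\leqslant\phi^*_i(X^2)$ requires $\|\phi^*_i(I)\|\leqslant 1$ for each $i$, which is not among the hypotheses of the lemma as stated; and reducing to self-adjoint $X$ by splitting into real and imaginary parts is only legitimate once you know $\text{F}(\phi^*)$ is $\dagger$-closed, which holds here but should be said.

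The paper's own proof does not go through the multiplicative structure at all: it cites the Brunel--Falkowitz result (Lemma $1.14$ of \cite{Krengel}), whose argument is purely one of convexity and commutativity for power-bounded contractions on a Banach space --- in essence that the identity is an extreme point of the convex semigroup generated by commuting contractions, via Krein--Milman. That is the feature your attempt is missing: the rigidity must be extracted from the norm/convexity structure (or, equivalently, from the splitting of Theorem \ref{theorem1} applied to each $\phi^*_j$ together with commutativity), not from an operator inequality on $X^2$. In a strictly convex space your idea would be immediate without any Kadison--Schwarz input, but $\mathcal{B}(\mathcal{H})$ and $\mathcal{B}_1(\mathcal{H})$ are not strictly convex, which is precisely why the commutativity hypothesis and the more delicate ergodic-theoretic argument are needed.
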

\begin{proof}
The proof is mostly due to A. Brunel \cite{Brunel} and  M. Falkowitz \cite{Falkowitz}. An elegant way of proving it is to use that the identity is an extreme point in the convex set consisting of all
contractions, an application of the Krein-Milman Theorem. For further details, see Lemma $1.14$ on p. $82$ in \onlinecite{Krengel}.
\end{proof}

Mean ergodic theorems enable us to understand the convergence of the averages $A_n(T)$ and the fixed space of $T$. In the subsequent section we show 
how these averages are used in evaluations of iterates $\phi^n$ and $\left(\phi^*\right)^n$.

\section{Main result}
\label{III}

In this chapter we are concerned with the iterates $\phi^n$ and $\left(\phi^*\right)^n$. In order to characterize them we need
to slightly modify Theorem $2$ of K. Yosida and S. Kakutani in \onlinecite{Kakutani}.
\begin{theorem}
\label{theorem2}
Let $T$ be a power bounded linear operator on a Banach space $\mathcal{B}$ and $\lambda$ any complex number with $|\lambda|=1$. Then $T_\lambda=T/\lambda$ is a 
power bounded operator such that 
\begin{enumerate}[label=\alph*)]
 \item $A_n(T_\lambda)$ converges strongly to $\mathcal{P}_\lambda$, the projection of $\mathcal{B}_{\text{me}}(T_\lambda)$ onto $\text{F}(T_\lambda)$ along  $\overline{\text{Rng}(I-T_\lambda)}$,
 \item $\mathcal{P}_\lambda=\mathcal{P}^2_\lambda=T_\lambda \mathcal{P}_\lambda=\mathcal{P}_\lambda T_\lambda$,
 \item $\lambda \neq \mu$ implies $\text{F}(T_\lambda)\cap \text{F}(T_\mu)=\{0\}$,
 \item $\mathcal{P}_\lambda \neq 0$ iff $\lambda$ is an eigenvalue of $T$.
\end{enumerate}
\end{theorem}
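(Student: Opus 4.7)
The plan is to obtain (a) and (b) as an immediate corollary of Theorem \ref{theorem1} applied to the operator $T_\lambda=T/\lambda$, and then to deduce (c) and (d) by elementary arguments about eigenvalues of $T$.

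First I would observe that since $|\lambda|=1$, we have $T_\lambda^n=T^n/\lambda^n$ and therefore $\|T_\lambda^n\|=\|T^n\|$, so $T_\lambda$ is power bounded on $\mathcal{B}$. This makes Theorem \ref{theorem1} directly applicable to $T_\lambda$, which yields the decomposition $\mathcal{B}_{\text{me}}(T_\lambda)=\text{F}(T_\lambda)\oplus\overline{\text{Rng}(I-T_\lambda)}$. The associated Cesàro limit $\mathcal{P}_\lambda x=\lim A_n(T_\lambda)x$ is well defined on $\mathcal{B}_{\text{me}}(T_\lambda)$, is the projection of $\mathcal{B}_{\text{me}}(T_\lambda)$ onto $\text{F}(T_\lambda)$, and satisfies $\mathcal{P}_\lambda=\mathcal{P}_\lambda^2=T_\lambda\mathcal{P}_\lambda=\mathcal{P}_\lambda T_\lambda$. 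This is exactly (a) and (b), with strong convergence of $A_n(T_\lambda)$ understood on $\mathcal{B}_{\text{me}}(T_\lambda)$.

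Next, for (c), I would simply note that $x\in\text{F}(T_\lambda)$ is equivalent to $Tx=\lambda x$. So if $x\in\text{F}(T_\lambda)\cap\text{F}(T_\mu)$, then $\lambda x=Tx=\mu x$, i.e., $(\lambda-\mu)x=0$, which forces $x=0$ when $\lambda\neq\mu$. This is essentially the standard observation that eigenspaces corresponding to distinct eigenvalues intersect trivially.

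Finally, for (d), I would argue both directions separately. If $\lambda$ is an eigenvalue of $T$ with eigenvector $x\neq 0$, then $T_\lambda x=x$, so $x\in\text{F}(T_\lambda)\subset\mathcal{B}_{\text{me}}(T_\lambda)$ (since $A_n(T_\lambda)x=x$ is trivially convergent), and consequently $\mathcal{P}_\lambda x=x\neq 0$, so $\mathcal{P}_\lambda\neq 0$. Conversely, if $\mathcal{P}_\lambda\neq 0$, choose $x\in\mathcal{B}_{\text{me}}(T_\lambda)$ with $y:=\mathcal{P}_\lambda x\neq 0$; by (b), $T_\lambda y=T_\lambda\mathcal{P}_\lambda x=\mathcal{P}_\lambda x=y$, so $Ty=\lambda y$ with $y\neq 0$, confirming that $\lambda$ is an eigenvalue of $T$. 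There is no real obstacle here: the content of the theorem is essentially Theorem \ref{theorem1} transported through the rescaling $T\mapsto T/\lambda$, and the only subtlety to keep track of is that part (a) is a statement about convergence on $\mathcal{B}_{\text{me}}(T_\lambda)$ rather than on all of $\mathcal{B}$.
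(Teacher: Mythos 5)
Your proposal is correct, and for parts (a), (b), and (d) it follows essentially the same route as the paper: power boundedness from $\|T_\lambda^n\|=\|T^n\|$, then Theorem \ref{theorem1} applied to $T_\lambda$, and the identification $\text{F}(T_\lambda)=\text{Ker}(\lambda I-T)$ together with the observation that $\mathcal{P}_\lambda$ acts as the identity on $\text{F}(T_\lambda)$. The genuine divergence is in part (c). You prove it by the elementary eigenspace argument: $x\in\text{F}(T_\lambda)\cap\text{F}(T_\mu)$ gives $\lambda x=Tx=\mu x$, hence $x=0$. This is a complete and shorter proof of the statement as written. The paper instead computes $A_n(T_\lambda)\mathcal{P}_\mu=\frac{1}{n}\frac{1-(\mu/\lambda)^{n+1}}{1-\mu/\lambda}\mathcal{P}_\mu\to 0$ via the relation $T\mathcal{P}_\mu=\mathcal{P}_\mu T=\mu\mathcal{P}_\mu$, concluding $\mathcal{P}_\lambda\mathcal{P}_\mu=0$ for $\lambda\neq\mu$ on the common domain. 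That heavier route buys strictly more than (c): the operator identities $\mathcal{P}_\lambda T=\lambda\mathcal{P}_\lambda$ and $\mathcal{P}_\lambda\mathcal{P}_\mu=0$ are exactly what is invoked later in Corollary \ref{coruse} to expand $\phi^n$; note that the mutual orthogonality of the projections does not follow formally from the triviality of $\text{F}(T_\lambda)\cap\text{F}(T_\mu)$ alone, since one would still need $\text{F}(T_\mu)\subseteq\overline{\text{Rng}(I-T_\lambda)}$. So your argument suffices for the theorem itself, but if you intend to use it as the paper does, you would want to record the geometric-series computation as well.
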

\begin{proof}
It is immediate from the relation
\begin{equation}
 \sup_k \|T^k \|_{\text{op}}=\sup_k \|T^k_\lambda \|_{\text{op}} \nonumber
\end{equation}
that $T_\lambda$ is power bounded, where $\|.\|_{\text{op}}$ is the operator norm. The strong convergence
\begin{equation}
 \lim_{n \to \infty} \| A_n(T_\lambda) x - \mathcal{P}_\lambda x \|=0, \quad \forall x \in \mathcal{B}_{\text{me}}(T_\lambda) \nonumber
\end{equation}
with the norm $\|.\|$ of $\mathcal{B}$ and the relations 
\begin{equation}
 \mathcal{P}_\lambda=\mathcal{P}^2_\lambda=T_\lambda \mathcal{P}_\lambda=\mathcal{P}_\lambda T_\lambda \nonumber
\end{equation}
are clear from Theorem \ref{theorem1}. Furthermore, we also have
\begin{equation}
 \mathcal{P}_\lambda T =T \mathcal{P}_\lambda=\lambda \mathcal{P}_\lambda. \label{use1}
\end{equation}
We have $F(T_\lambda)=\text{Ker}(\lambda I-T)$ and $F(T_\mu)=\text{Ker}(\mu I-T)$, so every $x$ in
the intersection of these spaces satisfies $\lambda x = T x = \mu x$. If $\lambda \neq \mu$ then $x=0$.
Hence, the statement in $c)$ is proven. Since $\mathcal{P}_\lambda$ is projecting onto $F(T_\lambda)=\text{Ker}(\lambda I-T)$
and $\text{Ker}(\lambda I-T) \neq \{0\}$ when $\lambda$ is an eigenvalue, we arrive at the statement in $d)$.
\end{proof}

We introduce the peripheral point spectrum
\begin{equation}
 \sigma_{p,p}(T)=\left\{\lambda \in \mathbb{C} : \text{Ker}(\lambda I-T) \neq \{0\}, \, |\lambda|=1\right\} \label{sigma}
\end{equation}
and
\begin{equation}
\mathcal{B}(T)=\bigcap_{\lambda \in \sigma_{p,p}(T)} \mathcal{B}_{\text{me}}(T_\lambda). \label{set}
\end{equation}
The peripheral point spectrum plays an essential role in the splitting theorem of Jacobs-DeLeeuw-Glicksberg \cite{EFHN} and its properties are under active investigations, see for example 
Refs. \onlinecite{Gluck1, Gluck2}. 

After all these preparations we return to the iterates of $\phi$ and $\phi^*$. The next result is essentially a consequence of Proposition \ref{prop1} and Theorem \ref{theorem2}.
\begin{corollary} \label{coruse}
Let $\{P_\lambda\}_{\lambda \in \sigma_{p,p}(\phi)}$ be the set of projectors associated according to Theorem \ref{theorem2} with $\phi: \mathcal{B}_1(\mathcal{H}) \rightarrow \mathcal{B}_1(\mathcal{H})$ 
and similarly $\{Q_\lambda\}_{\lambda \in \sigma_{p,p}(\phi^*)}$ with $\phi^*: \mathcal{B}(\mathcal{H}) \rightarrow \mathcal{B}(\mathcal{H})$. Set
\begin{eqnarray}
 S_\phi (X)&=&\phi(X)-\sum_{\lambda \in \sigma_{p,p}(\phi)} \lambda P_\lambda(X), \quad \forall X \in \mathcal{B}(\phi) \subseteq \mathcal{B}_1(\mathcal{H}), \nonumber \\
 S_{\phi^*}(X)&=&\phi^*(X)-\sum_{\lambda \in \sigma_{p,p}(\phi^*)} \lambda Q_\lambda(X), \quad \forall X \in \mathcal{B}(\phi^*) \subseteq \mathcal{B}(\mathcal{H}), \nonumber
\end{eqnarray}
then the iterates of $\phi$ and $\phi^*$ are given by the formulas
\begin{eqnarray}
\phi^n(X)&=& \sum_{\lambda \in \sigma_{p,p}(\phi)} \lambda^n P_\lambda(X) + S^n_\phi(X), \quad \forall X \in \mathcal{B}(\phi), \nonumber \\
\left(\phi^*\right)^n(X)&=& \sum_{\lambda \in \sigma_{p,p}(\phi^*)} \lambda^n Q_\lambda(X) + S^n_{\phi^*}(X), \quad \forall X \in \mathcal{B}(\phi^*) \nonumber
\end{eqnarray}
for $n\geqslant 1$. $\lambda$ is an eigenvalue of $S_\phi$ ($S_{\phi^*}$) with eigenvector in $\mathcal{B}(\phi)$ ($\mathcal{B}(\phi^*)$) iff is an eigenvalue of $\phi$ ($\phi^*$) and 
$\lambda \not \in \sigma_{p,p}(\phi)$ ($\lambda \not \in \sigma_{p,p}(\phi^*)$).
\end{corollary}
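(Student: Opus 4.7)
The plan is to reduce everything to three algebraic identities concerning the projectors supplied by Theorem \ref{theorem2}: namely $\phi P_\lambda = P_\lambda \phi = \lambda P_\lambda$, $P_\lambda^2 = P_\lambda$, and the mutual orthogonality $P_\lambda P_\mu = \delta_{\lambda \mu} P_\mu$. The first two are given directly by part b) of Theorem \ref{theorem2}. For the third, I would observe that $P_\mu$ maps $\mathcal{B}_{\text{me}}(\phi_\mu)$ into $\text{F}(\phi_\mu)$, and for $y \in \text{F}(\phi_\mu)$ with $\mu \neq \lambda$ one has $\phi_\lambda^k y = (\mu/\lambda)^k y$, so the Ces\`aro averages $A_n(\phi_\lambda) y$ form a geometric sum in $\mu/\lambda$ on the unit circle that converges to $0$. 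Hence $\text{F}(\phi_\mu) \subseteq \mathcal{B}_{\text{me}}(\phi_\lambda)$ and $P_\lambda$ vanishes on $\text{F}(\phi_\mu)$ (this is essentially part c) of Theorem \ref{theorem2}), while $P_\lambda$ acts as the identity on $\text{F}(\phi_\lambda)$, yielding the orthogonality.

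With these identities, the representation formula follows by induction on $n$. The base case $n=1$ is the definition of $S_\phi$. For the inductive step, the key observation is that $P_\nu S_\phi = 0$ for every $\nu \in \sigma(\phi)$: indeed
\begin{equation}
P_\nu S_\phi = P_\nu \phi - \sum_{\mu \in \sigma(\phi)} \mu P_\nu P_\mu = \nu P_\nu - \nu P_\nu = 0. \nonumber
\end{equation}
Consequently $P_\nu S_\phi^n = 0$ for every $n \geq 1$, so that $\phi S_\phi^n = S_\phi^{n+1}$, and writing $\phi^{n+1} = \phi \cdot \phi^n$ together with the inductive hypothesis gives
\begin{equation}
\phi^{n+1}(X) = \sum_{\lambda \in \sigma(\phi)} \lambda^{n+1} P_\lambda(X) + S_\phi^{n+1}(X). \nonumber
\end{equation}
Power boundedness of $\phi$ together with the fact that $\mathcal{B}(\phi)$ is invariant under $\phi$ and under every $P_\lambda$ ensures that all of these operators are well defined on $\mathcal{B}(\phi)$, so the induction closes. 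The argument for $\phi^*$ with the projectors $Q_\lambda$ is formally identical.

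For the eigenvalue correspondence, suppose first that $\phi x = \lambda x$ with $x \in \mathcal{B}(\phi)$, $x \neq 0$, and $\lambda \notin \sigma(\phi)$. Applying $P_\mu$ with $\mu \in \sigma(\phi)$ gives $\mu P_\mu x = \lambda P_\mu x$, hence $P_\mu x = 0$ since $\mu \neq \lambda$, and therefore $S_\phi x = \phi x = \lambda x$. Conversely, suppose $S_\phi x = \lambda x$ with $x \in \mathcal{B}(\phi)$, $x \neq 0$. Applying $P_\nu$ and using $P_\nu S_\phi = 0$ yields $\lambda P_\nu x = 0$, so for $\lambda \neq 0$ we deduce $P_\nu x = 0$ for every $\nu \in \sigma(\phi)$, hence $\phi x = S_\phi x = \lambda x$. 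Moreover $\lambda \notin \sigma(\phi)$, for otherwise $x \in \text{F}(\phi_\lambda)$ would force $P_\lambda x = x \neq 0$, contradicting what we just derived.

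The main obstacle, and the point I would flag for careful treatment, is the degenerate case $\lambda = 0$ in the converse direction: whenever $\sigma(\phi)$ is nonempty, any vector of the form $P_\mu z$ lies in $\ker S_\phi$ while still being an eigenvector of $\phi$ with eigenvalue $\mu \in \sigma(\phi)$, so a literal reading of the biconditional requires either restricting to $\lambda \neq 0$ or passing to the quotient $\mathcal{B}(\phi) / \bigoplus_{\mu \in \sigma(\phi)} \text{Rng}(P_\mu)$; on this quotient the map $S_\phi$ coincides with $\phi$ and the biconditional is clean. Everything else is routine algebra powered by Theorem \ref{theorem2} and Proposition \ref{prop1}.
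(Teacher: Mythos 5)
Your proof is correct and follows essentially the same route as the paper's: you establish $P_\nu S_\phi = S_\phi P_\nu = 0$ and the mutual orthogonality of the projectors from Theorem \ref{theorem2}, then obtain the power formula by induction (via $\phi S_\phi^n = S_\phi^{n+1}$, where the paper squares $\sum_\lambda \lambda P_\lambda + S_\phi$ directly --- an immaterial difference) and settle the eigenvalue correspondence by applying the projectors in both directions. Your flag on the degenerate case $\lambda = 0$ is well taken: the paper's own proof silently assumes $\lambda \neq 0$ on both sides of the biconditional, so the statement as literally written does need that restriction, and your observation that nonzero vectors in $\mathrm{Rng}(P_\mu)$ lie in $\mathrm{Ker}(S_\phi)$ without $0$ being an eigenvalue of $\phi$ is a genuine (if minor) correction to the corollary as stated.
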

\begin{proof}
We consider only the case of $\phi$, because the proof for $\phi^*$ is exactly along the same lines. Let $X_\lambda \in \text{Ker}(\lambda I-\phi)$ with $\lambda \in \sigma_{p,p}(\phi)$. Then we have
$\lambda X_\lambda \in \text{Ker}(\lambda I-\phi)$, $\|\lambda X_\lambda\|_1=\|X_\lambda\|_1$, and $\text{Ker}(\lambda I-\phi) \subset \mathcal{B}_{\text{me}}(\phi)$, because
\begin{eqnarray}
 &&\lim_{n \to \infty} \|A_n(\phi)(X_\lambda)\|_1=\lim_{n \to \infty} \frac{1}{n} \left| \frac{1-\lambda^n}{1-\lambda}\right| \|X_\lambda\|_1=0, \quad \lambda \neq 1, \nonumber \\
 &&\lim_{n \to \infty} \|A_n(\phi)(X_\lambda)\|_1=\|X_\lambda\|_1, \quad \text{when} \quad \lambda=1 \in \sigma_{p,p}(\phi). \nonumber
\end{eqnarray}
Since $\text{Ker}(\lambda I-\phi) \cap \text{Ker}(\mu I-\phi)=\{0\}$ for $\lambda \neq \mu$, we have
\begin{equation}
 \sum_{\lambda \in \sigma_{p,p}(\phi)} \lambda P_\lambda(X_\mu) \in \text{Ker}(\mu I-\phi), \quad \forall \mu \in \sigma_{p,p}(\phi) \nonumber
\end{equation}
and
\begin{equation}
 \mathrm{Rng} \left(\sum_{\lambda \in \sigma_{p,p}(\phi)} \lambda P_\lambda \right)=\overline{\mathrm{lin}} 
\left\{X \in \mathcal{B}_1(\mathcal{H}): \, \phi(X)=\lambda X, \, \lambda \in \sigma_{p,p}(\phi) \right\}. \nonumber
\end{equation}
By Theorem \ref{theorem2} we have for all $X \in \mathcal{B}(\phi)$
\begin{equation}
 P_\lambda S_\phi (X) = P_\lambda \left(\phi(X)-\sum_{\lambda \in \sigma_{p,p}(\phi)} \lambda P_\lambda(X) \right)=\lambda P_\lambda(X) - \lambda P_\lambda(X)=0 \nonumber 
\end{equation}
and similarly $S_\phi P_\lambda(X)=0$ for all $\lambda \in \sigma_{p,p}(\phi)$. It is immediate also that
\begin{equation}
 \phi S_\phi (X)=S_\phi \phi(X)=S^2_\phi (X). \nonumber
\end{equation}
Then, for all $X \in \mathcal{B}(\phi)$
\begin{eqnarray}
 \phi^2(X)&=&\left(\sum_{\lambda \in \sigma_{p,p}(\phi)} \lambda P_\lambda+ S_\phi \right)^2 (X) \nonumber \\
 &=& \left(\sum_{\lambda \in \sigma_{p,p}(\phi)} \lambda P_\lambda\right)^2 (X)+S^2_\phi(X), \nonumber
\end{eqnarray}
but according to Theorem \ref{theorem2} $P_\lambda P_\mu (X)=0$ whenever $\lambda \neq \mu$, which yields
\begin{equation}
\phi^2(X)=\sum_{\lambda \in \sigma_{p,p}(\phi)} \lambda^2 P_\lambda(X)+S^2_\phi(X). \nonumber 
\end{equation}
By induction we obtain the required result for all powers of $n$. To prove the last statement, let $\lambda \neq 0$ be such an eigenvalue of $\phi$ with eigenvector $X_\lambda \neq 0$ that 
\begin{equation}
 \phi (X_\lambda)=\lambda X_\lambda, \quad X_\lambda \in \mathcal{B}(\phi) \nonumber
\end{equation}
and $\lambda \not \in \sigma_{p,p}(\phi)$. Then, for all $\mu \in \sigma_{p,p}(\phi)$ 
\begin{equation}
 \mu P_\mu(X_\lambda)= P_\mu \phi (X_\lambda)=\lambda P_\mu (X_\lambda) \nonumber
\end{equation}
and since $\mu \neq \lambda$, we obtain $ P_\mu (X_\lambda)=0$. Therefore, 
\begin{equation}
 \lambda X_\lambda=\phi (X_\lambda)=\sum_{\mu \in \sigma_{p,p}(\phi)} \mu P_\mu(X_\lambda)+S_\phi (X_\lambda)=S_\phi (X_\lambda). \nonumber
\end{equation}
Hence $\lambda$ is an eigenvalue of $S_\phi$. Conversely, let $\lambda \neq 0$ be an eigenvalue of $S_\phi$  with eigenvector $X_\lambda \neq 0$ such that 
\begin{equation}
S_\phi (X_\lambda)=\lambda X_\lambda, \quad X_\lambda \in \mathcal{B}(\phi). \nonumber 
\end{equation}
We have already shown that $\phi S_\phi(X)=S^2_\phi (X)$ and using this relation we obtain
\begin{equation}
 \phi (X_\lambda)=\frac{1}{\lambda}\phi S_\phi (X_\lambda)=\frac{1}{\lambda} S^2_\phi (X_\lambda) = \lambda X_\lambda, \nonumber
\end{equation}
$\lambda$ is also an eigenvalue of $\phi$. In order to show that $\lambda \not \in \sigma_{p,p}(\phi)$, we assume that there exists a $\mu \in \sigma_{p,p}(\phi)$ such that $\lambda = \mu$. Since
\begin{equation}
 A_n(\phi/\mu)(X_\lambda)=\frac{1}{n} \left(\frac{\phi}{\mu}+\frac{\phi^2}{\mu^2}+ \dots + \frac{\phi^n}{\mu^n} \right)(X_\lambda) =X_\lambda, \quad n \geqslant 1 \nonumber
\end{equation}
and the limit $n \to \infty$  with the help of Theorem \ref{theorem1} results $P_\mu(X_\lambda)=X_\lambda$. It is a contradiction because
\begin{equation}
 P_\mu(X_\lambda)=\frac{1}{\lambda}P_\mu S_\phi (X_\lambda)=0, \nonumber 
\end{equation}
where we have used the already proved relation $P_\mu S_\phi =0$. Thus, $\lambda \not \in \sigma_{p,p}(\phi)$.
\end{proof}

\begin{remark}
 It is important to note that the sets $\mathcal{B}(\phi)$ and $\mathcal{B}(\phi^*)$ play an essential role in the above result. Therefore, it is natural to ask under which 
 conditions $\mathcal{B}(\phi)=\mathcal{B}_1(\mathcal{H})$ and $\mathcal{B}(\phi^*)=\mathcal{B}(\mathcal{H})$. This is the case, when $\phi$ and $\phi^*$ are weakly compact operators \cite{Yosida} or
 $\mathcal{H}$ is finite-dimensional, which imply that $\mathcal{B}_1(\mathcal{H})$ and $\mathcal{B}(\mathcal{H})$ are reflexive Banach spaces \cite{Lorch}.
 In general, the mean ergodicity of both $\phi$ and $\phi^*$ may be not simultaneously fulfilled.
\end{remark}

\begin{corollary}
If $\|\sum_i V_i V^\dagger_i \| \leqslant 1$ then the iterates of $\phi^*_2: \mathcal{B}_2(\mathcal{H}) \rightarrow \mathcal{B}_2(\mathcal{H})$ and its adjoint map $\phi_2$ 
are given by the formulas
\begin{eqnarray}
\left(\phi^*_2\right)^n(X)&=& \sum_{\lambda \in \sigma_{p,p}(\phi^*_2)} \lambda^n Q_{\lambda,2}(X) + S^n_{\phi^*_2}(X), \quad \forall X \in \mathcal{B}_2(\mathcal{H}) \nonumber \\
\phi^n_2(X)&=& \sum_{\lambda \in \sigma_{p,p}(\phi_2)} \lambda^n P_{\lambda,2}(X) + S^n_{\phi_2}(X), \quad \forall X \in \mathcal{B}_2(\mathcal{H}). \nonumber 
\end{eqnarray}
On the closures of the linear spans 
\begin{equation}
\overline{\mathrm{lin}} \left\{X: \, \phi^*_2(X)=\lambda X, \, \lambda \in \sigma_{p,p}(\phi^*_2) \right\} \,\text{and} \quad \overline{\mathrm{lin}} 
\left\{X: \, \phi_2(X)=\lambda X, \, \lambda \in \sigma_{p,p}(\phi_2) \right\} \nonumber
\end{equation}
$\phi^*_2$ and respectively $\phi_2$ restrict to unitary operators.
\end{corollary}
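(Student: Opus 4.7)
My plan is to recognize the two expansion formulas as immediate consequences of Corollary~\ref{coruse} applied in the right ambient space, and then verify the unitarity claim by a short Hilbert-space argument.

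First I would establish that $\mathcal{B}(\phi^*_2)=\mathcal{B}(\phi_2)=\mathcal{B}_2(\mathcal{H})$. The finite-sum hypothesis ensures, by the earlier corollary on $\phi^*_2$ together with the remark on $\phi_2$ that follows it, that both maps leave $\mathcal{B}_2(\mathcal{H})$ invariant. Contractivity of $\phi^*$ and $\phi$ passes to $\phi^*_2$ and $\phi_2$, and hence to each rotated operator $\phi^*_2/\lambda$, $\phi_2/\lambda$ for $|\lambda|=1$. Since $\mathcal{B}_2(\mathcal{H})$ is a Hilbert space, it is reflexive, so Lorch's theorem (already invoked in the proof of mean ergodicity of $\phi^*_2$) renders each of these rotated contractions mean ergodic. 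Intersecting over $\lambda\in\sigma(\phi^*_2)$ (resp.\ $\sigma(\phi_2)$) in the definition \eqref{set} then gives $\mathcal{B}(\phi^*_2)=\mathcal{B}(\phi_2)=\mathcal{B}_2(\mathcal{H})$, and Corollary~\ref{coruse} delivers both expansion formulas on all of $\mathcal{B}_2(\mathcal{H})$.

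For the unitarity claim I would first handle each fixed space separately: for $X\in\text{F}(\phi^*_2/\lambda)$ one has $\phi^*_2(X)=\lambda X$, so on this closed subspace the restriction is multiplication by the scalar $\lambda$; since $|\lambda|=1$, this is a bijective isometry with inverse $\bar{\lambda}\cdot\text{id}$, hence a unitary operator on $\text{F}(\phi^*_2/\lambda)$. The identical reasoning applies to $\phi_2$.

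The main obstacle is interpreting the claim over the whole union $\bigcup_{\lambda\in\sigma(\phi^*_2)}\text{F}(\phi^*_2/\lambda)$. Here I would establish mutual orthogonality of distinct peripheral eigenspaces in the Hilbert--Schmidt inner product. Repeating the argument used earlier to show $\text{F}(\phi^*_2)=\text{F}(\phi_2)$, but now applied to $\phi^*_2/\lambda$, one deduces from $\phi^*_2(X)=\lambda X$ with $|\lambda|=1$ and the contraction property that $\phi_2(X)=\bar{\lambda}X$. Then for $\phi^*_2(Y)=\mu Y$ with $\lambda\neq\mu$ in $\sigma(\phi^*_2)$ the chain
\begin{equation}
\lambda\,\langle X,Y\rangle_{\text{HS}}=\langle \phi_2 X,Y\rangle_{\text{HS}}=\langle X,\phi^*_2 Y\rangle_{\text{HS}}=\mu\,\langle X,Y\rangle_{\text{HS}} \nonumber
\end{equation}
forces $\langle X,Y\rangle_{\text{HS}}=0$. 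Consequently the closed linear span of the union is the orthogonal direct sum $\bigoplus_{\lambda}\text{F}(\phi^*_2/\lambda)$, on which $\phi^*_2$ acts as a direct sum of scalar unitaries and is therefore itself unitary; the corresponding statement for $\phi_2$ follows by interchanging the roles of the two adjoints.
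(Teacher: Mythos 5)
Your proof is correct and follows essentially the same route the paper intends: the paper's own proof simply declares the result ``immediate'' from Theorem~\ref{theorem2} and Corollary~\ref{coruse}, citing reflexivity of $\mathcal{B}_2(\mathcal{H})$ and mean ergodicity, which is exactly what you spell out via Lorch's theorem to get $\mathcal{B}(\phi^*_2)=\mathcal{B}(\phi_2)=\mathcal{B}_2(\mathcal{H})$. Your explicit verification that distinct peripheral eigenspaces are Hilbert--Schmidt orthogonal (via $\phi^*_2 X=\lambda X\Rightarrow\phi_2 X=\bar{\lambda}X$ for contractions) is a detail the paper leaves implicit, and it is what makes the phrase ``restricts to a unitary on the union'' precise, so it is a welcome addition rather than a deviation.
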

\begin{proof}
The first part is immediate from Theorem \ref{theorem2} and Corollary \ref{coruse}. One has to use the following facts:
$\mathcal{B}_2(\mathcal{H})$ is a Hilbert space; $\text{F}(T_\lambda)=\text{Ker}(\lambda I-T)$; $\phi^*_2$ and $\phi_2$ are mean ergodic and contractions due to Corollary \ref{coru}. 
Furthermore, $\phi_2/\lambda$ with $\lambda \in \sigma_{p,p}(\phi_2)$ is also a contraction and by applying Corollary \ref{cor2} one obtains $\text{F}(\phi^*_2/\lambda^*)=\text{F}(\phi_2/\lambda)$ and therefore
when $\lambda \in \sigma_{p,p}(\phi_2)$ implies $\lambda^* \in \sigma_{p,p}(\phi^*_2)$. We have also
\begin{equation}
 \overline{\mathrm{lin}} \left\{X: \, \phi^*_2(X)=\lambda X, \, \lambda \in \sigma_{p,p}(\phi^*_2) \right\}=\overline{\mathrm{lin}} 
\left\{X: \, \phi_2(X)=\lambda X, \, \lambda \in \sigma_{p,p}(\phi_2) \right\} \nonumber
\end{equation}
on which $\phi_2$ restricts to a unitary operator and $\phi^*_2$ is its inverse.
\end{proof}

This result is similar to the findings of B. Sz\H{o}kefalvi-Nagy and C. Foia\c{s}, a contraction on a Hilbert space
defines a decomposition of the Hilbert space into two parts, where on one of them the contraction acts as a
unitary operator \cite{SZNF}. On the other hand this can also be viewed as an application of the splitting theorem of Jacobs-DeLeeuw-Glicksberg, see Example $16.25$ in Ref. \onlinecite{EFHN}.

In order to investigate the limit $n \to \infty$ of the formulas obtained in Corollary \ref{coruse} one has to determine the spectrum of $\phi$ or $\phi^*$. It is of interest that 
$\|S_\phi\|_{\mathcal{B}_1(\mathcal{H})} <1$ and $\|S_{\phi^*}\|_{\mathcal{B}(\mathcal{H})} <1$, because then the asymptotic space can be identified through the projectors 
$\{P_\lambda\}_{\lambda \in \sigma_{p,p}(\phi)}$ and 
$\{Q_\lambda\}_{\lambda \in \sigma_{p,p}(\phi^*)}$. An obvious choice is that $\phi^*$ and $\phi$ are compact operators. The image of the unit ball under a compact operator is relatively compact, because 
the spectrum of a compact operator contains the cluster point $\{0\}$ and only eigenvalues, which form the point spectrum. According to Theorem \ref{theorem2} an eigenvalue $\lambda$ with 
$|\lambda|=1$ is not an eigenvalue of either $S_\phi$ or $S_{\phi^*}$. As both $\phi^*$ and $\phi$ are contractions, their spectrum is contained in the closed unit disc, we have
\begin{equation}
 \lim_{n \to \infty} S^n_\phi=\lim_{n \to \infty} S^n_{\phi^*}=0. \nonumber
\end{equation}
It turns out that one can make an even more general statement. But first we have to introduce the concept of quasi-compact operators. An operator $T$ on a Banach space $\mathcal{B}$ is 
quasi-compact if there exists an integer $n$ and a compact operator $V$ with $\|T^n-V \|_{\mathcal{B}}<1$.
This leads to the uniform ergodic theory of  K. Yosida and S. Kakutani, which is the consequence of applying Theorem \ref{theorem2} to quasi-compact operators \cite{Kakutani}. Thus, we are able
to prove the following result:

\begin{corollary}
Let $\phi$ and $\phi^*$ be quasi-compact operators. Then, there exists constants $\epsilon_\phi, \epsilon_{\phi^*}>0$ and $M_\phi, M_{\phi^*}>0$ such that
\begin{equation}
\|S^n_\phi\|_{\mathcal{B}_1(\mathcal{H})} \leqslant \frac{M_\phi}{\left(1+\epsilon_\phi\right)^n} \quad \text{and} \quad 
\|S^n_{\phi^*}\|_{\mathcal{B}(\mathcal{H})} \leqslant \frac{M_{\phi^*}}{\left(1+\epsilon_{\phi^*}\right)^n}. \nonumber
\end{equation}
\end{corollary}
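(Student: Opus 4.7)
The plan is to combine the Yosida--Kakutani uniform ergodic theorem with elementary Banach-space spectral decomposition. First, since quasi-compact power-bounded operators are uniformly (hence in particular mean) ergodic on the whole Banach space \cite{Kakutani, Krengel}, we have $\mathcal{B}(\phi) = \mathcal{B}_1(\mathcal{H})$ and $\mathcal{B}(\phi^*) = \mathcal{B}(\mathcal{H})$, so the decomposition of Corollary \ref{coruse} applies globally. Moreover, a quasi-compact operator has essential spectral radius strictly less than one, so the peripheral spectrum $\sigma(\phi) \cap \{|z|=1\}$ is finite and consists of isolated eigenvalues of finite multiplicity; these are precisely the elements of the set defined in \eqref{sigma}.

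Second, I would exploit the commutation relations $\phi P_\lambda = P_\lambda \phi = \lambda P_\lambda$ from Theorem \ref{theorem2}, together with $P_\lambda P_\mu = 0$ for $\lambda \neq \mu$, to produce a $\phi$-invariant splitting
\begin{equation}
\mathcal{B}_1(\mathcal{H}) = \bigoplus_{\lambda \in \sigma(\phi)} \text{Rng}(P_\lambda) \;\oplus\; N, \quad N = \bigcap_{\lambda \in \sigma(\phi)} \text{Ker}(P_\lambda). \nonumber
\end{equation}
The definition of $S_\phi$ in Corollary \ref{coruse} then shows that $S_\phi$ vanishes on each $\text{Rng}(P_\lambda)$ and coincides with $\phi$ on $N$. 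Identifying the $P_\lambda$ with the classical Riesz spectral projections attached to the peripheral eigenvalues of $\phi$, we obtain $\sigma(S_\phi) = \{0\} \cup \sigma(\phi|_N)$, which is contained in a disc of radius $r_\phi < 1$ since no peripheral spectrum remains in $\phi|_N$.

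Third, Gelfand's spectral radius formula yields $\lim_{n\to\infty} \|S_\phi^n\|^{1/n}_{\mathcal{B}_1(\mathcal{H})} = r_\phi < 1$. Choosing any $\epsilon_\phi \in \left(0, \frac{1}{r_\phi}-1\right)$ ensures $(1+\epsilon_\phi)^n \|S_\phi^n\|_{\mathcal{B}_1(\mathcal{H})} \to 0$, so this sequence is uniformly bounded by some $M_\phi > 0$ and the desired estimate follows. The argument for $\phi^*$ acting on $\mathcal{B}(\mathcal{H})$ is identical.

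The main obstacle I anticipate is identifying the Yosida--Kakutani projections $P_\lambda$ with the Riesz spectral projections of the corresponding peripheral eigenvalues; without this step one cannot conclude that $\sigma(S_\phi)$ is literally obtained from $\sigma(\phi)$ by deleting the unimodular eigenvalues. This matching is possible because quasi-compactness forces each peripheral eigenvalue of $\phi$ to be a pole of the resolvent of finite order, and for such poles the Ces\`{a}ro ergodic mean and the Riesz projection coincide; the remainder of the proof is then a routine application of Gelfand's formula.
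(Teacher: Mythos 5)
Your argument follows the same route as the paper, which proves this corollary essentially by citation: it invokes the Yosida--Kakutani uniform ergodic theorem and lists its cornerstones (mean ergodicity of quasi-compact operators, finite-rank peripheral projections, and the unit circle lying in the resolvent set of $S_\phi$ and $S_{\phi^*}$). You have in effect reconstructed the proof of that theorem, and the reconstruction is sound in outline. One point needs repair: you assert that for a peripheral eigenvalue that is ``a pole of the resolvent of finite order'' the Ces\`aro ergodic mean and the Riesz projection coincide. This is only true for a \emph{simple} pole; if $\lambda$ with $|\lambda|=1$ were a pole of order $k\geqslant 2$, the means $A_n(T/\lambda)$ would grow like $n^{k-1}$ and would not converge at all. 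The identification you need therefore rests on the observation that power boundedness of $\phi$ (Proposition \ref{prop1}) forces every peripheral pole to be simple, since a higher-order pole at $|\lambda|=1$ would make $\|\phi^n\|$ unbounded. With that sentence inserted, the remainder of your argument --- essential spectral radius strictly below $1$, the invariant splitting into $\bigoplus_\lambda \text{Rng}(P_\lambda)$ and $N$, the inclusion of $\sigma(S_\phi)$ in a disc of radius $r_\phi<1$, and Gelfand's formula --- goes through and delivers the stated geometric bound.
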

\begin{proof}
As the proof can be found in \onlinecite{Kakutani} (see Lemma $3.3$ and Theorem $4$) we show only the cornerstones of the argumentation. Quasi-compactness of $\phi$ and $\phi^*$ implies that 
$\lambda \phi$ and $ \lambda \phi^*$ are quasi-compact and mean-ergodic for each $\lambda$ on the unit circle. Furthermore, the ranges of the projectors $\{P_\lambda\}_{\lambda \in \sigma_{p,p}(\phi)}$ and 
$\{Q_\lambda\}_{\lambda \in \sigma_{p,p}(\phi^*)}$ are finite dimensional. Therefore, both $S_\phi$ and $S_{\phi^*}$ are quasi-compact and the unit circle belongs to their resolvent set, 
which is equivalent to the existence of $\epsilon_\phi, \epsilon_{\phi^*}>0$ and $M_\phi, M_{\phi^*}>0$, which fulfill the relations of the statement.
\end{proof}

\begin{remark}
It is worth to point out that the mean ergodicity of a power bounded bounded in Theorems \ref{theorem1} and \ref{theorem2} can be guaranteed by considering weakly quasi-compactness, a weaker assumption 
then weak compactness, see Theorem $3$ in Ref. \onlinecite{Kakutani}. The question, under which conditions is a completely positive map compact or quasi-compact, is left
open for now. Although the uniform ergodic theory is an old result, there are recent developments with respect to the iterates of quasi-compact operators, see for example Ref. \onlinecite{Nagler}.
\end{remark}

\begin{example}
Let us reconsider the quantum operation in Example \ref{ex1}. $\mathcal{M}_{2}(\mathbb{C})$ with the scalar product $\langle X, Y \rangle_{\text{HS}} =\mathrm{Tr} \{X^\dagger Y\}$ is a Hilbert space. The
orthonormal basis is chosen to be
\begin{eqnarray}
 &X_{1}=\begin{pmatrix} 1&0\\0&1\end{pmatrix}/\sqrt{2}, \quad X_{2}=\begin{pmatrix} -1&0\\0&1\end{pmatrix}/\sqrt{2}, \nonumber \\
 &X_{3}=\begin{pmatrix}  0&1\\1&0\end{pmatrix}/\sqrt{2}, \quad X_{4}=\begin{pmatrix} 0&-1\\1&0\end{pmatrix}/\sqrt{2}. \nonumber
\end{eqnarray}
Then
\begin{equation}
 \phi(X)=\lambda_1 P_{\lambda_1}(X)+\lambda_2 P_{\lambda_2}(X)+S_{\phi}(X), \nonumber
\end{equation}
where
\begin{eqnarray}
 \lambda_1=1 \quad \text{with} \quad P_{\lambda_1}(X)=\mathrm{Tr}\{X^\dagger_{1} X\} X_{1}, \nonumber \\
 \lambda_2=-1 \quad \text{with} \quad P_{\lambda_2}(X)=\mathrm{Tr}\{X^\dagger_{2} X\} X_{2}, \nonumber \\
 S_{\phi}(X)=(2p-1) \mathrm{Tr}\{X^\dagger_{3} X\} X_{3}+(1-2p) \mathrm{Tr}\{X^\dagger_{4} X\} X_{4}. \nonumber
\end{eqnarray}
It is immediate that
\begin{equation}
 \phi^n(X)=P_{\lambda_1}(X)+(-1)^n P_{\lambda_2}(X)+(2p-1)^n \mathrm{Tr}\{X^\dagger_{3} X\} X_{3}+(1-2p)^n \mathrm{Tr}\{X^\dagger_{4} X\} X_{4} \nonumber
\end{equation}
and
\begin{equation}
 \|S^n_\phi\|_{\mathcal{M}_{2}(\mathbb{C})}\leqslant |1-2p|^n, \nonumber
\end{equation}
where $\|.\|_{\mathcal{M}_{2}(\mathbb{C})}$ is the operator norm on $\mathcal{M}_{2}(\mathbb{C})$. When $p \in (0,1)$ we have the relations $|1-2p| <1$ and
$S^n_\phi \rightarrow 0$ as $n \to \infty$.
\end{example}

\begin{example}
Let $\mathcal{H}$ be a separable Hilbert space with orthonormal basis $e_n$ ($n=0,1,2, \dots$). We define the operators $a$ and $a^\dagger$ by
\begin{equation}
 ae_n=\sqrt{n} e_{n-1}, \quad a^\dagger e_n=\sqrt{n+1} e_{n+1}. \nonumber
\end{equation}
In physics, $\mathcal{H}$ is the symmetric Fock space, in which $e_n$ is associated with the number state of $n$ bosons, and $a^\dagger$ and $a$ are interpreted as 
creation and annihilation operators of bosons.

For the sake of simplicity we set $|n\rangle=e_n$ and then a typical vector $x$ has the unique expansion $x=\sum^\infty_{i=0} x_n |n\rangle$. The orthonormal basis $e'_n$ ($n=0,1,2, \dots$) of the dual of 
$\mathcal{H}$ has the property $e'_n(e_m)=0$ when $n \neq m$ and $e'_n(e_n)=1$. We define $\langle n |=e'_n$. The operator $a^\dagger$ is the adjoint of $a$, because
\begin{equation}
 \langle a^\dagger x, y \rangle = \sum^\infty_{n=0} \sqrt{n+1 }x^*_n y_{n+1}=\langle x, a y \rangle. \nonumber  
\end{equation}
Furthermore, they are unbounded operators and have the same domain of definition, which is dense in $\mathcal{H}$. We consider the 
following quantum operation
\begin{eqnarray}
&&\phi(X)=V_1 X V^\dagger_1 + V_2 X V^\dagger_2, \nonumber \\
&&V_1=\sqrt{p}I, \quad V_2= \sqrt{1-p} e^{-i \pi a^\dagger a}, \quad p \in (0,1), \nonumber
\end{eqnarray}
where $X \in \mathcal{B}_1(\mathcal{H})$ and $I$ is the identity map. Although $a^\dagger$ and $a$ are unbounded, $V_2$ is a bounded operator with spectral radius one. 

$\phi$ can be extend to $\mathcal{B}_2(\mathcal{H})$. Now, with the help of the Hilbert-Schmidt 
scalar product we define the following orthogonal projections
\begin{equation}
 P_{n,m}(X)=\mathrm{Tr}\left\{| m \rangle \langle n | X\right\} | n \rangle \langle m | \quad X \in \mathcal{B}_2(\mathcal{H}) \nonumber
\end{equation}
with $n,m=0,1,2,\dots$. Then
\begin{eqnarray}
 \phi^n(X)&=&\sum^\infty_{n=0} P_{n,n}(X)+\sum^{\infty}_{k=1}\sum^\infty_{n=0}\Big[ P_{n+2k,n}(X)+ P_{n,n+2k}(X)\Big]  \nonumber \\
 &+&(2p-1) \sum^{\infty}_{k=0}\sum^\infty_{n=0}\Big[ P_{n+2k+1,n}(X)+ P_{n,n+2k+1}(X)\Big]. \nonumber
\end{eqnarray}
Thus,
\begin{equation}
 S_{\phi}(X)=(2p-1) \sum^{\infty}_{k=0}\sum^\infty_{n=0}\Big[ P_{n+2k+1,n}(X)+ P_{n,n+2k+1}(X)\Big] \nonumber
\end{equation}
and
\begin{equation}
 \|S^n_\phi\|_{\mathcal{B}_2(\mathcal{H})}\leqslant |1-2p|^n, \nonumber
\end{equation}
where $\|.\|_{\mathcal{B}_2(\mathcal{H})}$ is the operator norm on $\mathcal{B}_2(\mathcal{H})$.
\end{example}


\bibliographystyle{amsplain}

\end{document}